\title{\LARGE \bf
A PAC-Bayesian Framework for Optimal Control\\ with Stability Guarantees
}
\author{Mahrokh Ghoddousi Boroujeni,\textsuperscript{1} Clara Luc\'{i}a Galimberti,\textsuperscript{1} Andreas Krause,\textsuperscript{2} and Giancarlo Ferrari-Trecate\textsuperscript{1}
    \thanks{
        \textsuperscript{1} Institute of Mechanical Engineering, EPFL, Switzerland.
        \texttt{\{mahrokh.ghoddousiboroujeni, clara.galimberti, giancarlo.ferraritrecate\}@epfl.ch}.
    }
    \thanks{
        \textsuperscript{2} Department of Computer Science, ETH Zürich, Switzerland.
        \texttt{krausea@ethz.ch}.
    }
    \thanks{
        Research supported by the Swiss National Science Foundation under the NCCR Automation (grant agreement 51NF40\textunderscore80545).
    } 
}
\DeclareMathAlphabet\mathbfcal{OMS}{cmsy}{b}{n}
\newcommand{\R}{\mathbb{R}}
\newcommand{\N}{\mathbb{N}}
\renewcommand{\S}{\mathbb{S}}
\newcommand{\U}{\mathbb{U}}
\newcommand{\X}{\mathbb{X}}
\newcommand{\Y}{\mathbb{Y}}     
\newcommand{\W}{\mathbb{W}}
\newcommand{\thetaset}{\Theta}
\newcommand{\D}{\mathcal{D}}
\renewcommand{\P}{\mathcal{P}}
\newcommand{\Q}{\mathcal{Q}}
\newcommand{\Emme}[0]{\mathcal{M}}
\renewcommand{\L}{\mathcal{L}}
\newcommand{\Lhat}{\hat{\mathcal{L}}}
\DeclareMathOperator*{\E}{\mathbb{E}}
\newcommand{\numrollouts}{s}
\DeclareMathOperator*{\argmin}{arg\,min}
\DeclareMathOperator*{\KL}{\mathrm{KL}}
\newtheorem{proposition}{Proposition}
\newtheorem{theorem}{Theorem}
\newtheorem{corollary}{Corollary}
\newtheorem{assumption}{Assumption}
\newcommand{\norm}[1]{\left\lVert#1\right\rVert}
\def\endthebibliography{%
  \def\@noitemerr{\@latex@warning{Empty `thebibliography' environment}}%
  \endlist
}
\newlist{mylist}{enumerate*}{1}
\setlist[mylist]{label=(\roman*)}
\begin{document}

\maketitle
\thispagestyle{empty}
\pagestyle{empty}

\begin{abstract}
Stochastic Nonlinear Optimal Control (SNOC) involves minimizing a cost function that averages out the random uncertainties affecting the dynamics of nonlinear systems. For tractability reasons, this problem is typically addressed by minimizing an empirical cost, which represents the average cost across a finite dataset of sampled disturbances. However, this approach raises the challenge of quantifying the control performance against out-of-sample uncertainties. Particularly, in scenarios where the training dataset is small, SNOC policies are prone to overfitting, resulting in significant discrepancies between the empirical cost and the true cost, i.e., the average SNOC cost incurred during control deployment. Therefore, establishing generalization bounds on the true cost is crucial for ensuring reliability in real-world applications. In this paper, we introduce a novel approach that leverages PAC-Bayes theory to provide rigorous generalization bounds for SNOC. Based on these bounds, we propose a new method for designing optimal controllers, offering a principled way to incorporate prior knowledge into the synthesis process, which aids in improving the control policy and mitigating overfitting. Furthermore, by leveraging recent parametrizations of stabilizing controllers for nonlinear systems, our framework inherently ensures closed-loop stability. The effectiveness of our proposed method in incorporating prior knowledge and combating overfitting is shown by designing neural network controllers for tasks in cooperative robotics. 
\end{abstract}

\section{Introduction}
\looseness -1 Stochastic Nonlinear Optimal Control (SNOC) involves minimizing a cost function that averages out stochastic uncertainties affecting the nonlinear system, such as process noise and disturbances. 
Beyond the linear quadratic Gaussian scenario, this problem is often intractable and is typically replaced by minimizing the average cost across a dataset of sampled uncertainties, namely, the empirical cost. 
However, this approximation
leads to the challenge of determining generalization bounds, i.e., to upper-bound the true (uncomputable) cost using its (computable) empirical version. 
A remarkably similar problem has been thoroughly explored in Machine Learning within the scope of Probably Approximately Correct (PAC)-Bayesian learning~\cite{userfriendly}.
Here, the generalization error of predictors drawn from an arbitrary posterior distribution is upper-bounded with a high probability, either in expectation or for a single draw. This upper bound comprises the empirical cost and a discrepancy measure between the posterior and a prior distribution.

PAC-Bayesian bounds have never been exploited for the analysis of SNOC problems, with the exception of~\cite{majumdar2021pac}, where it was employed in a different context, focusing on zero-shot generalization to new environments. 
Related contributions utilize PAC-Bayes for learning control policies in Markov Decision Processes~\cite{J1, J2}.  
However, these approaches assume discrete action or state spaces, which might not be well-suited for SNOC. 
One reason for the limited application of classic PAC-Bayesian bounds in control is that they require the use of aggregated predictors, resulting from weighted averaging of all predictors according to the posterior distribution.
In the SNOC context, such aggregation would require integrating actions from all possible controllers according to the posterior, a computationally challenging and impractical approach. Instead, we adopt a less commonly employed variant of PAC-Bayes bounds that apply to randomized predictors~\cite{userfriendly}, which are single predictors drawn from the posterior. This allows us to establish a generalization bound when deploying a \emph{single controller} sampled from the posterior, which is more practical.
The proposed PAC-Bayesian bound provides an alternative to existing methods for validating certificates of learned controllers, such as those based on mixed-integer programs or satisfiability-modulo-theory certificates~\cite{dawson2023safe, schwan2023stability}. 

\looseness -1 PAC-Bayesian bounds apply to control policies that are derived from the posterior distribution, either through integration or sampling. 
Ensuring stability when deploying the resulting controller is crucial in control systems. This requires that the posterior distribution assigns a non-zero weight only to controllers that stabilize the closed-loop system.

Recently, there has been a significant advancement 
in using neural networks in control loops~\cite{NeurSLS,furieri2024learning,wang2021learning,furieri2022distributed,gu2021recurrent,pauli2021offset,bonassi2022recurrent} or for providing stability certificates~\cite{abate2020formal,berkenkamp2018safe}.
In particular,\cite{NeurSLS,furieri2024learning,wang2021learning,furieri2022distributed} 
allow using broad families of nonlinear controllers based on neural networks to guarantee closed-loop stability while optimizing general cost functions. 
The flexibility in selecting the cost function enables promoting a broad spectrum of closed-loop specifications, ranging from safety to minimizing the L$_2$-gain, among others. 
In this work, we exploit
the Performance-Boosting Control architecture, an unconstrained parametrization of all and only stabilizing controllers for a given system~\cite{NeurSLS,furieri2024learning}.
Their implementation leverages Deep Neural Networks (DNNs) from~\cite{revay2023recurrent} 
which are flexible models capable of achieving complex tasks. 
Since~\cite{NeurSLS,furieri2024learning} guarantee closed-loop stability for any parameter choice, every posterior distribution over the induced space of parameters will exclusively assign non-zero probability to stabilizing controllers.

\subsection{Contributions}
\looseness -1 We develop a novel framework for establishing a generalization bound for SNOC cost based on randomized bounds in PAC-Bayes theory, a class of bounds not previously exploited in control design. 
Based on the derived bound, we provide a new control policy design algorithm, which involves sampling the controller from an inferred posterior distribution. 
We define this posterior distribution only over stabilizing controllers for a given system, enabling the deployment of our probabilistic algorithm while ensuring stability by design.
To our knowledge, this is the first PAC-Bayesian control design algorithm with stability guarantees.
Finally, we employ Stein Variational Gradient Descent (SVGD)~\cite{SVGD} to make sampling the controller from the posterior numerically tractable.
We demonstrate the feasibility of our method through two examples.
First, we showcase how our approach can incorporate prior coarse knowledge into control design using a toy example. Then, we demonstrate the potential of our approach in a cooperative robotic task.

\subsection{Notation}
$\N_0$ denotes $\N \cup \{0\}$. $\R^+$ ($\R^+_0$) stands for positive (non-negative) real numbers.
We denote the truncated sequence $(x_0,x_1,\ldots,x_{t})$ as $x_{t:0}$, where $x_\tau \in \mathbb{R}^n$ for all $\tau=0,\dots,t$. In the limit, as $t\rightarrow \infty$, this sequence is denoted as $x_{\infty:0}\coloneqq (x_0,x_1,\ldots)$.
The set of all sequences $x_{\infty:0}$ is denoted as $\ell^n$. 
Moreover,  $x_{\infty:0}$ belongs to $\ell_p^n \subset \ell^n$ 
for $p \in \mathbb{N}\cup\{+\infty\}$,  
if $\norm{x_{\infty:0}}_p = \left(\sum_{\tau=0}^\infty |x_\tau|^p\right)^{1/p} < +\infty$ for $p \in \mathbb{N}$, and $\operatorname{sup}_{\tau}|x_\tau|< +\infty$, for $p=\infty$,
where $|\cdot|$ denotes the Euclidean norm.
We represent causal operators over sequences with $A_{\infty:0} : \ell^n\rightarrow\ell^{n'}$, defined as $A_{\infty:0}(x_{\infty:0}) = (A_0(x_0), A_1(x_{1:0}), \dots, A_t(x_{t:0}), \dots )$.
Every instance $A_t(x_{t:0})$, for a fixed $t\in\N_0$, is a function in $\mathbb{R}^{n\times(t+1)} \rightarrow \mathbb{R}^{n'}$.
The operator $A_{\infty:0}$ is said to be $\ell_p$-stable if $A_{\infty:0}(x_{\infty:0}) \in \ell_p^{n'}$ for all $x_{\infty:0}\in\ell_p^n$.
Similar to sequences, we define truncated operators as 
$A_{t:0}(x_{t:0}) = (A_0(x_0), A_1(x_{1:0}), \dots, A_t(x_{t:0}))$.
Finally, the zero vector in $\mathbb{R}^n$ is denoted as $0_n$.

\section{Stochastic nonlinear optimal control}
We consider nonlinear discrete-time time-varying systems described by the equation:
\begin{subequations}
\label{eq:system}
\begin{align}
    x_0 &= \Bar{x} + w_0, \label{eq:system_ini} \\
    x_t &= f_t(x_{t-1:0}, u_{t-1:0}) + w_t, \quad t=1,2,\ldots,\label{eq:system_step}
\end{align}
\end{subequations}
where $\Bar{x} \in \X \subset \R^n$ denotes the nominal initial state, $x_t \in \X \subset \R^n$ represents the state vector, $u_t \in \U\subset \R^m$ denotes the control input, and $w_t \in \W \subset \R^n$ stands for unknown process noise.
The uncertainty on $x_0$ is modeled through $w_0$.
The noise follows a potentially unknown distribution, $w_t \sim \D_t$.
We assume the nominal initial state, $\Bar{x}$, and the system dynamics, $f_t$, are known.
To control the system~\eqref{eq:system}, we employ a time-varying dynamical feedback controller parameterized by $\theta \in \thetaset \subset \R^d$, given by: \looseness-1
\begin{equation}
    \label{eq:control}
    u_t = -K_{t}^\theta (x_{t:0}), \quad t=0,1,\ldots.
\end{equation}
Given a system, a controller parameterized by $\theta$, and a sequence of noise samples, $w_{t:0}$, the closed-loop system~\eqref{eq:system}-\eqref{eq:control} yields unique state and input trajectories, $(x_{t:0}, u_{t:0}) =r^\theta_{t:0}(w_{t:0})$.
The deterministic mapping $r^\theta_{t:0} : \W^{t+1} \rightarrow (\X {\times} \U)^{t+1}$, is referred as the \emph{rollout map}, where its components are $r^\theta_t : w_{t:0}\mapsto (x_t,u_t)$.
For notation simplicity, we omit the dependence of $r^\theta_{t:0}$ on the nominal initial state, $\Bar{x}$, and the system dynamics, $f_{t}$.
Moreover, when clear from the context, we will drop the subscript and write $r^\theta$.
In this work, we assume that system~\eqref{eq:system} is $\ell_p$-stable or has been already prestabilized, i.e., the map $(w_{\infty:0},u_{\infty:0})\rightarrow x_{\infty:0}$ is $\ell_p$-stable. 
As in~\cite{NeurSLS}, the closed-loop system~\eqref{eq:system}-\eqref{eq:control} is called $\ell_p$-stable if the rollout map, $r^\theta_{\infty:0}$, is $\ell_p$-stable.

We consider the following SNOC problem over a finite horizon of $T\in \N$: 
\begin{subequations}\label{eq:SNOC}
\begin{align}
    \text{SNOC}:~~ \min_{\theta \in \thetaset}\quad& \L(\theta, \D_{T:0}), \label{eq:snoc_ob}\\
    \textit{s.t.}\quad& (x_t, u_t) = r^\theta_t(w_{t:0}), \quad t = 0, \ldots, T,  \label{eq:snoc_dynamics}\\
    &r^\theta_{\infty:0} \text{ is } \ell_p\text{-stable}, \label{eq:snoc_stability}
\end{align}
\end{subequations}
where,
\begin{align}
    \L(\theta, \D_{T:0}) &\coloneqq
        \E_{w_{T:0} \sim \D_{T:0}} L(\theta, w_{T:0}),\label{eq:true_cost}\\
    L(\theta, w_{T:0}) &\coloneqq 
        \frac{1}{T} \sum_{t=0}^{T}l(x_t,u_t). \label{eq:cost_L}
\end{align}
The objective is to synthesize a controller parameterized by $\theta$ that minimizes the cost $\L$ while adhering to the closed-loop dynamics and stability constraints imposed by~\eqref{eq:snoc_dynamics} and~\eqref{eq:snoc_stability}, respectively.
In~\eqref{eq:cost_L}, $l: \X \times \U \rightarrow \R_0^+$ denotes a stage cost function, which we assume to be piecewise differentiable and nonnegative for all $(x,u) \in \X \times \U$.
Given $\theta$ and $w_{T:0}$, $L$ defined in~\eqref{eq:cost_L} averages the stage cost over the horizon $T$, referred to as the \textit{finite-horizon (FH) cost}.
The cost $\L$ in~\eqref{eq:true_cost}, dubbed the \emph{true cost}, represents the expectation of the FH cost for any realization of noise sampled from $\D_{T:0}\coloneqq (\D_0, \ldots, \D_T)$.
By averaging out the random noise affecting the dynamics, the true cost, $\L$, provides a suitable criterion for optimizing the controller parameters, as in~\eqref{eq:SNOC}.

Computing the true cost is often prohibitive due to involving an expectation calculation or is infeasible due to unknown noise distributions, $\D_{T:0}$. Instead, SNOC methods typically resort to optimizing the \emph{empirical cost} based on observed data, $\S \coloneqq \bigl\{w_{T:0}^i\bigr\}_{i=1}^\numrollouts$, consisting of $\numrollouts \in \N$ noise sequences of length $T+1$ drawn from $\D_{T:0}$, indicated as $\S \sim \D^s_{T:0}$.
The empirical cost is defined as:\looseness-1
\begin{align}
\Lhat(\theta, \S) &\coloneqq 
\frac{1}{\numrollouts} \sum_{i=1}^{\numrollouts} L(\theta, w_{T:0}^i)
\label{eq:emp_cost},
\end{align}
averaging the FH cost of noise sequences in $\S$. 
Unlike $\L(\theta, \D_{T:0})$, the empirical cost can be computed and optimized to obtain the controller parameters. The resulting controller is referred to as the \emph{empirical controller}.
However, the empirical controller may exhibit significantly worse performance when confronted with out-of-sample noise sequences than those in the training set $\S$, a phenomenon corresponding to overfitting in supervised learning. 
To rigorously certify the performance of $\L$, we aim to upper-bound it using its empirical estimate, $\Lhat$, employing the PAC-Bayes framework, briefly introduced in the next section.
In Section~\ref{sec:controller_parametrization}, we delve into considerations about closed-loop stability.

\section{Generalization bounds}\label{sec:PAC}
In this section, we establish a generalization bound on the true cost and use it to propose a control design algorithm.

\subsection{PAC-Bayes generalization bound}
We leverage the PAC-Bayes framework that applies to learning algorithms that incorporate a prior distribution $\P$ with the observed data $\S$ to derive a posterior distribution $\Q$. 
Importantly, $\P$ and $\Q$ are not necessarily linked through Bayes' law. In the context of learning controllers, both $\P$ and $\Q$ are defined over the controller parameters, $\theta$.
In this study, we utilize PAC-Bayes bounds that necessitate the prior distribution to be chosen independently from the data and yield bounds for individual samples from the posterior.

We make the following assumption which facilitates deriving PAC-Bayesian bounds.
\begin{assumption}\label{assumption}
    The FH cost is upper-bounded by $C\in \R^+$, 
    i.e.,
    $L(\theta, w_{T:0}) \in [0, C)$, for all $\theta \in \thetaset$ and $w_{T:0} \in \W^{T+1}$.
\end{assumption}

Assumption~\ref{assumption} may not apply to a wide range of SNOC problems.
For instance, even a simple quadratic stage cost can lead to an unbounded $L$ if the closed-loop system is unstable. We propose to map any unbounded cost function to $[0, C)$. To achieve this, we choose the transformation below:
\begin{align}
    \Tilde{L}(x_{T:0},u_{T:0}) \coloneqq C \tanh(L(x_{T:0},u_{T:0})/\gamma), \label{eq:cost_trans}
\end{align}
where $\tanh$ represents the hyperbolic tangent function and $\gamma \in \R^+$ is a suitable constant.
This transformed cost grows almost linearly with the original cost between $0$ and $\gamma$,
and then smoothly saturates at the maximum value of $C$. 
In general, we expect that the stage cost, $l$, decreases over time. Therefore, the cost at the initial nominal state, $l(\Bar{x}, 0_m)$, is a reasonable selection for $\gamma$. 
This choice retains the transformed cost predominantly within the linear region rather than in the saturation region.
In the sequel, we apply~\eqref{eq:cost_trans} to any unbounded FH cost~\eqref{eq:cost_L} used in the SNOC problem~\eqref{eq:SNOC}, whenever necessary.

Next, we present our PAC-Bayes bound on the true cost,~$\L$.
\begin{theorem}[Adapted from Theorem~2.7 in~\cite{userfriendly}]
\label{theo:pac}
    Consider the noise distribution $\D_{T:0}$, a dataset $\S \sim \D_{T:0}^\numrollouts$, a prior distribution $\P$ independent of $\S$, and the FH cost $L$.
    Under Assumption~\ref{assumption}, for every $\lambda>0$, confidence level $\delta \in (0,1)$, posterior distribution $\Q$, and controller parameters $\theta \sim \Q$, the inequality,
    \begin{align}
        \L(\theta, \D_{T:0}) 
        \leq \Lhat(\theta, \S)  
        + \frac{1}{\lambda}\Bigl(
            \ln \frac{d\Q}{d\P}(\theta) + \ln \frac{1}{\delta}
        \Bigr)
        + \frac{\lambda C^2}{8 \numrollouts}, \label{eq:bound}
    \end{align}
    holds with probability at least $1-\delta$ over simultaneously sampling $\S \sim \D_{T:0}^\numrollouts$ and $\theta \sim \Q$. The true and empirical costs, $\L$ and $\Lhat$, are defined in~\eqref{eq:true_cost} and~\eqref{eq:emp_cost} respectively.
\end{theorem}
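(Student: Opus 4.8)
The plan is to follow the standard two-ingredient recipe for PAC-Bayes bounds—an exponential-moment (Chernoff) control of the deviation between $\L$ and $\Lhat$, followed by a change of measure from the prior $\P$ to the posterior $\Q$—but to carry out the change of measure in the way that yields a guarantee for a \emph{single} draw $\theta \sim \Q$ rather than for the $\Q$-averaged risk.

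First I would fix $\theta$ and control the moment generating function of the deviation. Writing $\L(\theta,\D_{T:0}) - \Lhat(\theta,\S) = \frac{1}{\numrollouts}\sum_{i=1}^\numrollouts \bigl(\L(\theta,\D_{T:0}) - L(\theta, w_{T:0}^i)\bigr)$, each summand is a centered random variable, and under Assumption~\ref{assumption} it takes values in an interval of length $C$. Hoeffding's lemma then gives, for each i.i.d.\ sample, $\E_{w_{T:0}^i}\exp\bigl(\tfrac{\lambda}{\numrollouts}(\L - L)\bigr) \le \exp\bigl(\tfrac{\lambda^2 C^2}{8\numrollouts^2}\bigr)$. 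Since the $\numrollouts$ noise sequences are drawn independently, the moment generating function tensorizes, yielding $\E_{\S}\exp\bigl(\lambda(\L(\theta,\D_{T:0}) - \Lhat(\theta,\S))\bigr) \le \exp\bigl(\tfrac{\lambda^2 C^2}{8\numrollouts}\bigr)$ for every fixed $\theta$.

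Next I would exploit that $\P$ is chosen independently of $\S$. Defining the nonnegative random variable $g(\theta,\S) \coloneqq \exp\bigl(\lambda(\L(\theta,\D_{T:0}) - \Lhat(\theta,\S)) - \tfrac{\lambda^2 C^2}{8\numrollouts}\bigr)$, the previous step together with Fubini gives $\E_{\S}\E_{\theta\sim\P}\, g(\theta,\S) \le 1$. The key step—what makes this a single-draw rather than an aggregated bound—is to reweight by the Radon--Nikodym derivative: assuming $\Q \ll \P$ (otherwise $\ln\frac{d\Q}{d\P}(\theta)=+\infty$ and the bound is vacuous), the importance-sampling identity $\E_{\theta\sim\Q}\bigl[\tfrac{d\P}{d\Q}(\theta)\, g(\theta,\S)\bigr] = \E_{\theta\sim\P}\, g(\theta,\S)$ shows that $\tfrac{d\P}{d\Q}(\theta)\, g(\theta,\S)$ is a nonnegative random variable, over the joint draw of $\S$ and $\theta\sim\Q$, with expectation at most $1$.

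Finally I would apply Markov's inequality: with probability at least $1-\delta$ over sampling $\S\sim\D_{T:0}^\numrollouts$ and $\theta\sim\Q$ simultaneously, $\tfrac{d\P}{d\Q}(\theta)\, g(\theta,\S) \le \tfrac{1}{\delta}$. Taking logarithms, substituting the definition of $g$, using $-\ln\frac{d\P}{d\Q}(\theta) = \ln\frac{d\Q}{d\P}(\theta)$, and dividing by $\lambda>0$ rearranges exactly into~\eqref{eq:bound}. I expect the only genuinely delicate point to be the change of measure: one must reweight \emph{before} invoking Markov, so that the free posterior $\Q$ appears while the expectation is still taken against the data-independent prior $\P$, and one must track the absolute-continuity requirement $\Q\ll\P$ under which $\frac{d\Q}{d\P}$ is well defined. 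The Hoeffding and tensorization steps are routine by comparison.
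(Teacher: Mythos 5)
Your proof is correct and follows essentially the same route as the paper: the paper establishes Theorem~\ref{theo:pac} by invoking Theorem~2.7 of~\cite{userfriendly} through the supervised-learning-to-SNOC correspondence of Table~\ref{tab:sup-snoc}, and your Hoeffding--Fubini--change-of-measure--Markov argument is precisely the standard proof of that cited theorem, instantiated with noise sequences as samples, the rollout map as predictor, and the FH cost as loss. One minor technicality: when $\Q \ll \P$ but $\P \not\ll \Q$, the derivative $\frac{d\P}{d\Q}$ should be read as $\bigl(\frac{d\Q}{d\P}\bigr)^{-1}$ (well defined $\Q$-a.s.), and your ``importance-sampling identity'' is then only the inequality $\E_{\theta\sim\Q}\bigl[\bigl(\tfrac{d\Q}{d\P}(\theta)\bigr)^{-1} g(\theta,\S)\bigr] = \int_{\{d\Q/d\P>0\}} g(\theta,\S)\,d\P(\theta) \le \E_{\theta\sim\P}\, g(\theta,\S)$ --- but it runs in the direction you need, so the conclusion stands.
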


Theorem~\ref{theo:pac} provides an upper bound on the true cost of any $\theta$ sampled from $\Q$ based on its empirical cost, a discrepancy measure between the posterior and prior at $\theta$, and a term that is constant with respect to $\theta$.
Theorem~\ref{theo:pac} and its proof are derived by adapting Theorem~2.7 from~\cite{userfriendly}, originally formulated for supervised learning, to the SNOC framework. This adaptation involves establishing a correspondence between elements in supervised learning and elements in our problem, as summarized in Table~\ref{tab:sup-snoc}. While supervised learning deals with data samples comprising features and labels, in our context, we work with noise sequences. 
In supervised learning, a predictor, $h^{\theta}$, maps features to a label, whereas in SNOC, the rollout map, $r^\theta$, generates closed-loop state and input sequences. 
Moreover, in supervised learning, a loss function evaluates a predictor at a given sample point, while in SNOC, we use the FH cost to evaluate trajectories generated by the rollout map. 
By recognizing and aligning these correspondences, we can effectively apply PAC-Bayes bounds from supervised learning to the SNOC domain.
\begin{table}
\vspace{6pt}
\caption{Correspondence between supervised learning and SNOC}
\label{tab:sup-snoc}
\setlength{\tabcolsep}{1pt}
\begin{center}
\begin{tabular}{lllll}
\hline
\multicolumn{2}{c}{Supervised Learning} &  & \multicolumn{2}{c}{SNOC} \\ \cline{1-2} \cline{4-5} \\ 
Sample & $(x,y) \in \X {\times} \Y$             && Noise sequence & $w_{T:0} \in \W^{T+1}$           \\
Predictor & $h^{\theta}: \X {\xrightarrow[]{}} \Y$ && Rollout        & $r^\theta: \W^{T+1} {\xrightarrow[]{}} (\X{\times} \U)^{T+1}$ \\
Loss                & $l(\theta, x, y)$               && FH cost          &  $L(\theta, w_{T:0})$      \\
\hline
\end{tabular}
\end{center}
\end{table}

The majority of PAC-Bayes bounds in supervised learning focus on \textit{aggregated predictors}, which combine the output of all possible predictors according to the posterior distribution, $\E_{\theta \sim \Q} h^\theta(\cdot)$~\cite{userfriendly}. 
Consequently, these bounds involve the average of true and empirical losses with respect to sampling predictors from the posterior, denoted as $\E_{\theta \sim \Q} \L(\theta, \cdot)$ and $\E_{\theta \sim \Q} \Lhat(\theta, \cdot)$, respectively (see, for instance,~\cite{Catoni}). Adapting PAC-Bayes bounds for aggregated predictors to SNOC leads to obtaining bounds in which the control action is obtained by combining actions from all possible controllers sampled from the posterior, i.e., $u_t = - \E_{\theta \sim \\Q} K^{\theta}_t (x_{t:0})$. 
However, such averaging is generally impractical in control systems, where using a unique controller to obtain $u_t$ is preferred. Moreover, computing the expectation can be computationally prohibitive.
Therefore, Theorem~\ref{theo:pac} is derived by adapting a less commonly used family of PAC-Bayes bounds in supervised learning that considers randomized predictors, which obtain the output using a single predictor drawn from the posterior distribution. 
Consequently, Theorem~\ref{theo:pac} aligns well with the standard setup in control systems as it holds for a fixed sample of the controller parameters from the posterior, $\theta \sim \Q$.

\subsection{Optimal posterior}\label{subsec:qstar}
While the prior and posterior terms in Theorem~\ref{theo:pac} resemble Bayesian terminology, the posterior distribution, $\Q$, is not necessarily obtained through Bayes’ theorem. 
The flexibility in choosing $\Q$ can be exploited to achieve desired characteristics.
In particular, for a given $\theta$ sampled from $\Q$, it is desirable that $\Q$ is selected to minimize the upper bound on $\L$, which is the right-hand side of~\eqref{eq:bound}.
However, this choice is hampered by a circular dependency since $\Q$ must be fixed before sampling $\theta$.
To overcome this issue, we propose to select $\Q$ by minimizing the expected value of the upper bound over the draw of the controller parameters:
\begin{align}
    \Q^* \coloneqq& \argmin_{\Q} \E_{\theta \sim \Q}  \Bigl[\Lhat(\theta, \S) 
    + \frac{1}{\lambda} \Bigl( \ln \frac{d\Q}{d\P}(\theta)
    + \ln \frac{1}{\delta} \Bigr) 
    + \frac{\lambda C^2}{8 \numrollouts} \Bigr] \nonumber \\
    =& \argmin_{\Q} \E_{\theta \sim \Q}  \Bigl[\Lhat(\theta, \S) 
    + \frac{1}{\lambda} \ln \frac{d\Q}{d\P}(\theta)\Bigr] \nonumber\\
    =& \argmin_{\Q} \E_{\theta \sim \Q}  \Bigl[\Lhat(\theta, \S) \Bigr] +
    \frac{1}{\lambda} \KL\bigl(\Q \Vert \P \bigr) 
    ,\label{eq:def_qstar}
\end{align}
where $\KL$ stands for the Kullback–Leibler divergence.
This expression involves a trade-off, modulated by $\lambda$, between minimizing the expected empirical cost and enhancing the similarity of the posterior to the prior. This similarity is captured through the KL divergence term.
The prior can be selected such that the similarity term promotes the integration of existing knowledge into the learning procedure or regulates the posterior complexity to avoid overfitting. 
Examples of such priors are provided in Section~\ref{sec:experiments}.
In extreme cases, $\Q^*$ becomes a degenerate distribution at the minimizer(s) of $\Lhat$ as $\lambda \xrightarrow[]{} \infty$, and $\Q^* \xrightarrow[]{} \P$ as $\lambda \xrightarrow[]{} 0$. 
Below, we present first the closed-form solution for $\Q^*$ and then an approach to select $\lambda$.

\begin{corollary}[Lemma~1.1.3 in~\cite{Catoni}]\label{corol:qstar}
    The minimizer of~\eqref{eq:def_qstar} is the \textit{Gibbs} distribution, characterized by:
    \begin{align} 
    \Q^*(\theta) 
    &= \P(\theta) \,
    e^{-\lambda \Lhat(\theta, \S)}
    / Z_\lambda(\P,\S) \label{eq:qstar},\\
    Z_\lambda(\P,\S) &\coloneqq \E_{\theta \sim \P} 
    e^{-\lambda \Lhat(\theta, \S)}, \label{eq:Z}
    \end{align}
    where $Z_\lambda(\P,\S)$ normalizes the distribution $\Q^*$. 
    \looseness -1
\end{corollary}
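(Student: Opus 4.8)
The plan is to recognize the objective in~\eqref{eq:def_qstar} as a convex functional of $\Q$ and to solve it by a ``completing-the-KL'' identity rather than by Lagrangian calculus, since the former yields both the optimizer and its uniqueness in one stroke. Writing the objective as $J(\Q) \coloneqq \E_{\theta \sim \Q}[\Lhat(\theta, \S)] + \frac{1}{\lambda}\KL(\Q \Vert \P)$, the strategy is to show that for the claimed Gibbs distribution $\Q^*$ in~\eqref{eq:qstar} the objective decomposes as $J(\Q) = \frac{1}{\lambda}\KL(\Q \Vert \Q^*) - \frac{1}{\lambda}\ln Z_\lambda(\P,\S)$, so that minimizing $J$ over $\Q$ reduces to minimizing $\KL(\Q \Vert \Q^*)$, which is achieved uniquely at $\Q = \Q^*$.

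First I would check that $\Q^*$ is a well-defined probability distribution. By Assumption~\ref{assumption}, $L(\theta, w_{T:0}) \in [0,C)$, hence $\Lhat(\theta,\S) \in [0,C)$ and the integrand satisfies $e^{-\lambda \Lhat(\theta, \S)} \in (e^{-\lambda C}, 1]$, so the normalizer $Z_\lambda(\P,\S) = \E_{\theta \sim \P} e^{-\lambda \Lhat(\theta, \S)}$ is finite and strictly positive. This makes $\Q^*$ a bona fide density against $\P$ that shares support with $\P$, i.e., $\Q^* \ll \P$ and $\P \ll \Q^*$ simultaneously.

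Next I would restrict attention to posteriors $\Q$ that are absolutely continuous with respect to $\P$: if $\Q \not\ll \P$ then $\KL(\Q \Vert \P) = +\infty$ and $\Q$ cannot be a minimizer, so such candidates are discarded. For any admissible $\Q$, since $\P$ and $\Q^*$ are mutually absolutely continuous, I can factor the Radon--Nikodym derivative as $\frac{d\Q}{d\P}(\theta) = \frac{d\Q}{d\Q^*}(\theta)\,\frac{d\Q^*}{d\P}(\theta)$, where from~\eqref{eq:qstar} we have $\frac{d\Q^*}{d\P}(\theta) = e^{-\lambda \Lhat(\theta, \S)}/Z_\lambda(\P,\S)$. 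Taking logarithms and the expectation over $\theta \sim \Q$ gives
\begin{align*}
    \KL(\Q \Vert \P) = \KL(\Q \Vert \Q^*) - \lambda\,\E_{\theta \sim \Q}[\Lhat(\theta, \S)] - \ln Z_\lambda(\P,\S).
\end{align*}
Substituting this into $J(\Q)$, the two copies of $\E_{\theta \sim \Q}[\Lhat(\theta,\S)]$ cancel and the advertised decomposition emerges.

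The conclusion then follows from Gibbs' inequality: $\KL(\Q \Vert \Q^*) \geq 0$ with equality if and only if $\Q = \Q^*$. Since $-\frac{1}{\lambda}\ln Z_\lambda(\P,\S)$ is constant in $\Q$, the unique minimizer of $J$ is $\Q^*$. I expect the only delicate point to be the measure-theoretic bookkeeping in factoring the Radon--Nikodym derivative and in justifying that non-absolutely-continuous posteriors may be ignored; the strict positivity and boundedness of $e^{-\lambda \Lhat}$ furnished by Assumption~\ref{assumption} is exactly what guarantees that $\P$ and $\Q^*$ share support and that the chain rule applies. Everything beyond this is a short, purely algebraic rearrangement.
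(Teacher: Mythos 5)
Your proof is correct and takes essentially the same route as the paper's: the paper establishes the corollary by citing Catoni's Lemma~1.1.3, whose standard proof is precisely your ``completing-the-KL'' (Donsker--Varadhan) decomposition $J(\Q) = \frac{1}{\lambda}\KL(\Q \Vert \Q^*) - \frac{1}{\lambda}\ln Z_\lambda(\P,\S)$ followed by Gibbs' inequality. Your measure-theoretic bookkeeping (strict positivity and finiteness of $Z_\lambda$ under Assumption~1, mutual absolute continuity of $\P$ and $\Q^*$, and discarding posteriors with $\KL(\Q \Vert \P) = +\infty$) is sound and matches the details implicit in the cited lemma.
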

We refer to $\Q^*$ as the \textit{optimal posterior} distribution.
By substituting~\eqref{eq:qstar} into~\eqref{eq:bound}, one obtains:
\begin{align}
    \L(\theta, \D_{T:0}) 
    \leq
    \frac{1}{\lambda} \Bigl( \ln \frac{1}{\delta}
    - \ln Z_\lambda(\P,\S) \Bigr)
    + \frac{\lambda C^2}{8 \numrollouts} ,
    \label{eq:bound_qstar}
\end{align}
holding with probability at least $1-\delta$ over simultaneously sampling $\S \sim \D_{T:0}^\numrollouts$ and $\theta \sim \Q$.
The term $- 1/\lambda \ln Z_\lambda(\P,\S)$ always falls within the range $[0, C)$ and decreases when the prior $\P$ assigns higher probability to controller parameters with lower empirical costs for $\S$, as evident from~\eqref{eq:Z}. 
Therefore, the bound tightens when the prior $\P$ is ``more aligned'' with $\S$. Recall, however, that the prior $\P$ is chosen independently from $\S$.
The simplified bound~\eqref{eq:bound_qstar} applies when using the optimal posterior distribution, $\Q^*$, hence eliminating the explicit dependence on $\Q$, and is tighter than the generic bound in~\eqref{eq:bound}.

The bound~\eqref{eq:bound_qstar} is more valuable when its right-hand side is smaller. 
This can be achieved by tuning $\lambda>0$, which according to Theorem~\ref{theo:pac}, can be freely selected.
However, computing the term $-1/\lambda \ln Z_\lambda(\P, \S)$ in~\eqref{eq:bound_qstar} is often challenging as it involves calculating an expectation, as per~\eqref{eq:Z}.
Under Assumption~\ref{assumption}, this term is in $[0, C)$. Therefore, we replace $-1/\lambda \ln Z_\lambda(\P, \S)$ with $C$ to obtain an upper bound on~\eqref{eq:bound_qstar}. Below, we propose selecting $\lambda$ based on this looser upper bound. 
\begin{proposition}
    The optimal $\lambda>0$ that minimizes the upper bound on the right-hand side of~\eqref{eq:bound_qstar} obtained by replacing $-1/\lambda \ln Z_\lambda(\P, \S)$ with $C$, is given by:
    \begin{align}
        \lambda^* = \sqrt{8 \numrollouts \ln (1/\delta)} / C. \label{eq:lambdastar}
    \end{align}
\end{proposition}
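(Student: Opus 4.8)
The plan is to write out explicitly the function of $\lambda$ that results from the stated substitution and then minimize it by elementary calculus. After replacing $-\tfrac{1}{\lambda}\ln Z_\lambda(\P,\S)$ with the constant $C$ on the right-hand side of~\eqref{eq:bound_qstar}, the quantity to be minimized becomes
\[
    g(\lambda) \coloneqq \frac{1}{\lambda}\ln\frac{1}{\delta} + C + \frac{\lambda C^2}{8\numrollouts}, \qquad \lambda > 0.
\]
The additive constant $C$ plays no role in the optimization, so the problem reduces to minimizing a function of the canonical form $a/\lambda + b\lambda$ with $a = \ln(1/\delta)$ and $b = C^2/(8\numrollouts)$, both strictly positive since $\delta \in (0,1)$, $C \in \R^+$, and $\numrollouts \in \N$.

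First I would impose the first-order condition $g'(\lambda) = -a/\lambda^2 + b = 0$, which yields $\lambda^* = \sqrt{a/b}$. Substituting the values of $a$ and $b$ gives
\[
    \lambda^* = \sqrt{\frac{\ln(1/\delta)}{C^2/(8\numrollouts)}} = \frac{\sqrt{8\numrollouts\ln(1/\delta)}}{C},
\]
which matches~\eqref{eq:lambdastar}. To confirm that this stationary point is the global minimizer over $\lambda > 0$, I would observe that $g''(\lambda) = 2a/\lambda^3 > 0$ for all $\lambda > 0$, so $g$ is strictly convex on the positive reals and its unique critical point is therefore the minimizer. The same conclusion follows even more directly from the AM--GM inequality, since $a/\lambda + b\lambda \geq 2\sqrt{ab}$ for $\lambda > 0$ with equality exactly at $\lambda = \sqrt{a/b}$.

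Since the argument amounts to a one-line minimization of a strictly convex scalar function, I do not anticipate any genuine obstacle. The only points worth verifying are that $a$ and $b$ are both strictly positive---guaranteed by $\delta < 1$, which makes $\ln(1/\delta) > 0$, together with $C > 0$ and $\numrollouts \geq 1$---so that $\lambda^*$ is real and strictly positive, and hence an admissible choice consistent with the constraint $\lambda > 0$ inherited from Theorem~\ref{theo:pac}.
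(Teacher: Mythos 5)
Your proof is correct and is exactly the argument the paper intends (the paper omits the proof as routine): after the substitution, the bound has the form $a/\lambda + b\lambda + C$ with $a = \ln(1/\delta)$ and $b = C^2/(8\numrollouts)$, and its unique minimizer over $\lambda > 0$ is $\sqrt{a/b}$, yielding~\eqref{eq:lambdastar}. Your verification of strict convexity (or equivalently the AM--GM argument) and of the strict positivity of $a$ and $b$ properly closes the argument, so nothing is missing.
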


Given a prior $\P$ and a dataset $\S$,~\eqref{eq:qstar} and~\eqref{eq:lambdastar} completely specify a posterior distribution, requiring no additional free parameters.  
The resulting distribution achieves a balance between fitting the data and incorporating prior knowledge.

\section{Learning stabilizing controllers}
\label{sec:controller_parametrization}
In order to guarantee closed-loop stability, we leverage the Performance-Boosting Control design approach of~\cite{NeurSLS,furieri2024learning}.
In general, stability can be imposed in an SNOC problem~\eqref{eq:SNOC} in two ways.
The first one is to resort to 
constrained optimization approaches that ensure global or local stability by enforcing appropriate Lyapunov-like inequalities during optimization~\cite{berkenkamp2018safe,gu2021recurrent,pauli2021offset}.
However, enforcing such constraints often becomes a computational bottleneck in complex applications.
This issue has been highlighted in~\cite{dawson2023safe}, discussing that popular stability verification methods are feasible for controllers with an order of $10^2$ parameters.
The second possibility is to use
unconstrained optimization approaches searching within classes of control policies with built-in stability guarantees~\cite{wang2021learning,furieri2022distributed, NeurSLS, furieri2024learning}.
These methods allow learning controllers through optimization algorithms based on gradient descent, without imposing further constraints to ensure stability.
Additionally, unconstrained approaches usually have a much lower computational burden than constrained ones~\cite{NeurSLS, furieri2024learning}.
In Section~\ref{subsec:numerical_SVGD}, we show that unconstrained approaches also enable the use of algorithms for sampling the posterior, such as SVGD.
Moreover, in Section~\ref{subsec:experiments_robots}, we demonstrate the scalability of these approaches by training a stabilizing controller with more than $10^4$ parameters.

In this work, we follow~\cite{NeurSLS,furieri2024learning}, which propose
an \emph{unconstrained parametrization} of all and only stabilizing nonlinear control policies for a nonlinear time-varying system, as per~\eqref{eq:system}, 
exploiting the internal model control framework~\cite{Economou_Morari_nlIMC_1986}. 
This method is based on choosing a unique free operator, denoted as $\Emme_{\infty:0}$.
The main result is that, assuming $\ell_p$-stability of system~\eqref{eq:system}, the closed-loop map is $\ell_p$-stable if and only if the operator $\Emme_{\infty:0}$ is also $\ell_p$-stable.%
\footnote{For ease of reading, we neglect the base controller in~\cite{NeurSLS}; assuming that, if needed, it is already included in the system dynamics $f_t$.}
The control action $u_t$, for any $t \in \N_0$, can be obtained as:
\begin{subequations}
\label{eq:neurSLS}
\begin{align}
    \hat{w}_t &= x_t - f_t(x_{t-1:0},u_{t-1:0}), \label{eq:neurSLS_w}\\
    u_t &= \Emme_t(\hat{w}_{t:0}). \label{eq:neurSLS_output}
\end{align}
\end{subequations}
The noise is reconstructed in~\eqref{eq:neurSLS_w} using the knowledge of the system dynamics, and then, $u_t$ is calculated though~\eqref{eq:neurSLS_output}.
Note that~\eqref{eq:neurSLS} is a dynamical feedback controller, as in~\eqref{eq:control}, with input $x_{t:0}$, output $u_t$, and internal states $\hat{w}_{t:0}$ and $u_{t:0}$.

Implementing~\eqref{eq:neurSLS} requires parameterizing $\Emme_{\infty:0}$. This is addressed in~\cite{NeurSLS,furieri2024learning} by adopting the Recurrent Equilibrium Networks (RENs) of~\cite{revay2023recurrent}, which allows representing a broad set of $\ell_2$-stable operators. 
Below, we briefly summarize this approach and refer the reader to~\cite{revay2023recurrent} for a more comprehensive explanation.
The resulting parametrized operator, $\Emme^\theta_{\infty:0}$, is a dynamical system, given by:
\begin{equation}\label{eq:REN}
    \begin{bmatrix}
    \xi_{t+1} \\
    \zeta_t \\
    u_t
    \end{bmatrix} 
	=
    {\Omega(\theta)}
    \begin{bmatrix}
    \xi_{t} \\
    \sigma(\zeta_t) \\ 
    \hat{w}_{t} 
    \end{bmatrix}
    \,,
    \quad 
    \xi_0 = 0_{n_\xi}
    \,,
\end{equation}
where $\xi_t\in\R^{n_\xi}$ is the internal state and $\zeta_t\in\R^{n_\zeta}$ is an internal variable. Recall that $\theta$, $\hat{w}_{t}$, and $u_t$ are the parameters, inputs, and outputs, respectively. The activation function $\sigma:\R\rightarrow\R$ is applied element-wise and it must be piece-wise differentiable with first derivatives restricted to the interval $[0, 1]$.
The mapping $\Omega: \R^d \rightarrow \R^{(n_\xi + n_\zeta + m)\times(n_\xi + n_\zeta + n)}$ ensures that $\zeta_t$ has a unique value for every $t\in\N_0$, and that~\eqref{eq:REN} is an $\ell_2$-stable operator for every $\theta \in \R^d$, i.e, one has $\thetaset=\R^d$. 
We also highlight that RENs can embed arbitrarily deep NNs by suitably selecting the structure of the matrix $\Omega(\theta)$~\cite{revay2023recurrent}. 
Overall,~\eqref{eq:neurSLS_w}-\eqref{eq:REN} specify a DNN-based controller which ensures closed-loop stability for any choice of parameters $\theta \in \R^d$.

\section{Practical implementation}
\subsection{Estimating the upper bound}\label{subsec:numerical_Z} 
In~\eqref{eq:bound_qstar}, we have derived an upper bound on the true cost when employing the optimal posterior. 
However, explicitly computing this upper bound is generally challenging due to the involvement of the $Z_\lambda$ term, as given by~\eqref{eq:Z}, which requires integrating over the controller parameters sampled from the prior. Consequently, we resort to bounding this term by its empirical counterpart.
\begin{proposition}\label{prop:num_ub}
Consider the setup of Theorem~\ref{theo:pac} when using the optimal posterior, $\Q^*$, given by~\eqref{eq:qstar}.
Let $\theta_1, \ldots, \theta_{n_p}$ be $n_p$ independent samples from the prior, $\P$. 
For every confidence level $\hat{\delta}\in (0,1)$ and any $n_p \geq (e^{\lambda C} - 1)^2 \ln(1/\hat{\delta})/2$, the inequality,
\begin{align*}
    \L(&\theta, \D_{T:0}) 
    \leq
    \frac{1}{\lambda} \ln \frac{1}{\delta} + \frac{\lambda C^2}{8 s}\nonumber \\
    &- \frac{1}{\lambda} \ln \left( 
        \frac{1}{n_p} \sum_{i=0}^{n_p} e^{- \lambda \Lhat(\theta_i, \S)} - 
        (1-e^{-\lambda C}) \sqrt{\frac{\ln(1/\hat{\delta})}{2 n_p}} 
    \right)
    ,
\end{align*}
holds with probability at least $(1-\delta)(1-\hat{\delta})$ over simultaneously sampling $\S \sim \D_{T:0}^\numrollouts$, $\theta \sim \Q$, and $\theta_1, \ldots, \theta_{n_p} \sim \P^{n_p}$.
\end{proposition}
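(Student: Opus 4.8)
The plan is to begin from the closed-form bound~\eqref{eq:bound_qstar}, which already holds with probability at least $1-\delta$, and to replace the uncomputable normalizer $Z_\lambda(\P,\S)$ by a high-probability lower bound assembled from an empirical Monte Carlo estimate. First I would introduce the random variables $X_i \coloneqq e^{-\lambda \Lhat(\theta_i, \S)}$ associated with the prior samples $\theta_i \sim \P$, and note from~\eqref{eq:Z} that $\E_{\theta_i \sim \P}[X_i] = Z_\lambda(\P,\S)$, so that $\frac{1}{n_p}\sum_i X_i$ is an unbiased estimator of $Z_\lambda$. Because $\Lhat(\theta,\S)$ is an average of FH costs, Assumption~\ref{assumption} gives $\Lhat(\theta,\S)\in[0,C)$, whence each $X_i$ takes values in $(e^{-\lambda C}, 1]$, an interval of width at most $1-e^{-\lambda C}$.

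With this boundedness established, the next step is to apply the one-sided Hoeffding inequality to the upper tail of $\frac{1}{n_p}\sum_i X_i - Z_\lambda$. Setting the resulting tail bound equal to $\hat{\delta}$ and solving gives, with probability at least $1-\hat{\delta}$ over $\theta_1,\dots,\theta_{n_p}\sim\P^{n_p}$,
\begin{equation*}
    Z_\lambda(\P,\S) \geq \frac{1}{n_p}\sum_{i=1}^{n_p} e^{-\lambda\Lhat(\theta_i,\S)} - (1-e^{-\lambda C})\sqrt{\frac{\ln(1/\hat{\delta})}{2 n_p}},
\end{equation*}
which is exactly the expression inside the logarithm of the claimed inequality. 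The hypothesis $n_p \geq (e^{\lambda C}-1)^2 \ln(1/\hat{\delta})/2$ enters here to guarantee that this lower bound is strictly positive: each $X_i > e^{-\lambda C}$ forces the empirical mean above $e^{-\lambda C}$, while the stated condition rearranges (using $1-e^{-\lambda C} = e^{-\lambda C}(e^{\lambda C}-1)$) to $(1-e^{-\lambda C})\sqrt{\ln(1/\hat{\delta})/(2 n_p)} \leq e^{-\lambda C}$, so the Hoeffding correction cannot overtake $e^{-\lambda C}$ and the argument of $\ln$ stays in the domain where $-\ln(\cdot)$ is well defined and decreasing. Monotonicity of $-\ln$ then lets me substitute this lower bound on $Z_\lambda$ into the term $-\frac{1}{\lambda}\ln Z_\lambda(\P,\S)$ of~\eqref{eq:bound_qstar}, yielding the right-hand side of the proposition.

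The step I expect to demand the most care is fusing the two probabilistic guarantees into the product $(1-\delta)(1-\hat{\delta})$ rather than the weaker union-bound value $1-\delta-\hat{\delta}$. Let $E_1$ denote the event that~\eqref{eq:bound_qstar} holds (a function of $\S$ and $\theta\sim\Q^*$) and $E_2$ the Hoeffding event above (a function of $\S$ and $\theta_1,\dots,\theta_{n_p}$); these are not unconditionally independent, since both read $\S$ through $\Lhat$. The resolution is to condition on $\S$: the prior samples $\theta_1,\dots,\theta_{n_p}$ are drawn independently of $\S$, so Hoeffding yields $\Pr[E_2\mid\S]\geq 1-\hat{\delta}$ uniformly over \emph{every} realization of $\S$, and $\theta$ and the $\theta_i$ are conditionally independent given $\S$. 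Writing $\Pr[E_1\cap E_2] = \E_\S[\Pr[E_1\mid\S]\,\Pr[E_2\mid\S]]$ and factoring out the uniform bound $\Pr[E_2\mid\S]\geq 1-\hat{\delta}$ leaves $(1-\hat{\delta})\,\E_\S[\Pr[E_1\mid\S]] = (1-\hat{\delta})\Pr[E_1] \geq (1-\delta)(1-\hat{\delta})$, as claimed.
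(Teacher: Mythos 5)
Your proof is correct and follows essentially the same route as the paper's: a one-sided Hoeffding bound giving a high-probability Monte Carlo lower bound on $Z_\lambda(\P,\S)$, the condition on $n_p$ to keep the argument of the logarithm positive, substitution into~\eqref{eq:bound_qstar}, and multiplication of the two confidence levels. If anything, your final step is more careful than the paper's, which simply asserts that $\theta$, $\S$, and $\theta_1,\ldots,\theta_{n_p}$ are independent---not literally true, since $\theta\sim\Q^*$ and $\Q^*$ depends on $\S$---whereas your conditioning-on-$\S$ argument, using that $\Pr[E_2\mid\S]\geq 1-\hat{\delta}$ holds uniformly over realizations of $\S$ and that $\theta$ and the $\theta_i$ are conditionally independent given $\S$, justifies the product $(1-\delta)(1-\hat{\delta})$ rigorously.
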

\begin{proof}
Under Assumption~\ref{assumption}, $e^{-\lambda \Lhat(\theta, \S)}  \in [e^{-\lambda C}, 1]$ for any $\theta$. This allows applying Hoeffding's inequality~\cite{concentration} to lower-bound $Z_\lambda$:
\begin{align}
    Z_\lambda(\P, \S) \geq \frac{1}{n_p} \sum_{i=0}^{n_p} e^{- \lambda \Lhat(\theta_i, \S)} - 
    (1-e^{-\lambda C})\sqrt{\frac{\ln(1/\hat{\delta})}{2n_p}},\label{eq:proof_empub}
\end{align}
holding with probability at least $(1-\hat{\delta})$ over simultaneously sampling $\theta_1, \ldots, \theta_{n_p} \sim \P^{n_p}$.
Since $n_p \geq (e^{\lambda C} - 1)^2 \ln(1/\hat{\delta})/2$, the right-hand side of~\eqref{eq:proof_empub} is positive, enabling us to substitute~\eqref{eq:proof_empub} into~\eqref{eq:bound_qstar}.
Since $\theta$, $\S$, and $\theta_1, \ldots, \theta_{n_p}$ are independent random variables, the probability that both inequalities~\eqref{eq:bound_qstar} and~\eqref{eq:proof_empub} hold simultaneously is the multiplication of the probability that each inequality holds individually, i.e., $(1-\delta)(1-\hat{\delta})$.
\end{proof}
Proposition~\ref{prop:num_ub} leverages sampled controllers from the prior. 
For common prior distributions with known cumulative density functions, sampling from the prior is straightforward. Hence, the estimated bound can be easily computed.
Although the resulting bound is less tight than the original one in~\eqref{eq:bound_qstar}, the additional conservatism diminishes as $n_p$ grows.

\subsection{Sampling from the posterior}
\label{subsec:numerical_SVGD}
The optimal posterior distribution, $\Q^*$, can be computed according to~\eqref{eq:qstar} up to the normalization constant $Z_\lambda(\P, \S)$, as discussed in Section~\ref{subsec:numerical_Z}. 
Direct sampling from such distributions that do not integrate to unity, called improper distributions, is challenging in general. Although one could use Markov chain Monte Carlo methods for sampling from an improper distribution, this approach tends to be computationally slow, especially for large datasets~\cite{green2015bayesian}.
Instead, we employ a Variational Bayes method, as proposed in~\cite{Alquier2016variational}, which allows approximating $Q^*$ by a tractable distribution and then sampling from this approximated distribution. 
Specifically, we use Stein Variational Gradient Descent (SVGD)~\cite{SVGD} which approximates $\Q^*$ using a set of $k\in\N$ particles, denoted as $\{\phi_1, \cdots, \phi_k\}$.
In our context, each particle $\phi_\kappa$ for $\kappa=1, \ldots, k$ corresponds to the parameters of a controller, i.e.,  $\phi_\kappa \in \thetaset$. SVGD initializes these particles by sampling them from the prior and then iteratively updates them aiming that their density in the parameter space $\thetaset$ matches the probability mass of $\Q^*$.
SVGD relies on gradient descent for updating the particles, rendering it well-suited for integration with the unconstrained parametrization outlined in Section~\ref{sec:controller_parametrization}.
After training, $\Q^*$ is approximated as a uniform distribution over the trained particles.
Hence, sampling from the approximated distribution corresponds to randomly selecting one of the particles with equal probability. The application of SVGD to approximate Gibbs distributions arising in PAC-Bayesian methods has been found useful in previous work, such as~\cite{pacoh, pacpfl}.
For more details and the precise implementation of SVGD, we refer to~\cite{SVGD}.

\section{Experiments}
\label{sec:experiments}
We conduct experiments using two systems: 
\textit{i)} a simple Linear Time-Invariant (LTI) system, and 
\textit{ii)} a system of two planar robots navigating to prespecified locations while avoiding collisions.
To train our algorithm and compute the upper bound, we utilize $\Q^*$ and $\lambda^*$ as per~\eqref{eq:qstar}, and~\eqref{eq:lambdastar} and apply the transformation~\eqref{eq:cost_trans} with $C=1$.
We compare our algorithm against the empirical controller. To maintain the standard setting, the empirical controller is trained without applying the transformation~\eqref{eq:cost_trans}.
The code to reproduce our experiments can be found at \url{https://www.github.com/DecodEPFL/PAC-SNOC}.

\subsection{LTI system}
\paragraph*{\textbf{Setup}}
We use a scalar stable LTI system, given by:
\begin{align*}
    x_0 &= \Bar{x} + w_0,\\ 
    x_{t} &= a x_{t-1} + b u_{t-1} + w_t , \quad t=1,2, \ldots,
\end{align*}
with $a=0.8$, $b=0.1$, and $\Bar{x} = 2$.
This simple setup will allow us to thoroughly illustrate our control design method and the derived PAC bounds.
At each time step, the noise $w_t$ is drawn from a Gaussian distribution with mean $\mu_w = 0.3$ and variance $\sigma_w^2 = 0.09$, i.e., $w_t \sim \mathcal{N}(0.3, 0.09)$.
We assume the noise distribution is unknown. Instead, we possess a dataset of $\numrollouts$ noise sequences over a horizon of $T=10$, and
we vary $\numrollouts$ to analyze its impact.

The stage cost function is quadratic and defined as $l_{LQ}(x_t, u_t) = 5 x_t^2 + 0.003 u_t^2$.
We utilize the affine state feedback controller $u_t = -(k x_t + \beta)$, where $k$ and $\beta$ represent the controller gain and bias. We set $\theta = [k, \beta]^\top$.
Ensuring closed-loop stability for this system corresponds to imposing $k \in (-2, 18)$. This constraint can be readily enforced during optimization by projecting the gain, $k$, into this set during training, obviating the need for complex techniques to guarantee stability, such as~\eqref{eq:neurSLS}.

\paragraph*{\textbf{Baselines}}
Our methodology is compared against two alternative approaches. First, an \textit{empirical} controller obtained by minimizing $\Lhat(\theta, \S)$ defined in~\eqref{eq:emp_cost}.
Second, a \textit{benchmark} (ideal) controller that assumes precise knowledge of the noise mean value, which
allows setting $\beta=\mu_w/b$. 
The benchmark controller's gain, $k$, is optimized by minimizing the empirical cost over a large dataset comprising $1024$ noise sequences. While the benchmark controller demonstrates strong performance, its practicality is limited by its dependence on a substantial amount of data and accurate knowledge of $\mu_w$. Nevertheless, it is involved in this experiment as the best-performing competitor. 

\paragraph*{\textbf{Prior distribution}}
The prior distribution for $\theta = [k, \beta]^\top$ must be chosen independently from the data, $\S$. We opt for a prior that is independent for $k$ and $\beta$. Specifically, the prior for $k$ is a Gaussian distribution centered at the infinite horizon LQR (IH-LQR) gain---which can be obtained using the stage cost $l_{LQ}$---and with a variance of $1.0$. Although the IH-LQR may not be optimal for our scenario with a finite horizon of $T=10$, it is a reasonable data-independent choice.
The choice of the prior distribution for $\beta$ can incorporate known information about the expected value of the noise, $\mu_w$.
We explore two scenarios: 
In the first case, we assume to only know that $\mu_w$ lies in the interval $[-0.5, 0.5]$ and we set the prior for $\beta$ as the uniform distribution over $[-0.5/b, 0.5/b]$, i.e., $\mathcal{U}(-5, 5)$.
In the second case, $\mu_w$ is known with some uncertainty, and we set the prior distribution for $\beta$ as a Gaussian distribution with a mean of $\mu_w/b=3$ and a variance of $1.5^2$.
These prior distributions are called $\P_\mathcal{U}$ and $\P_\mathcal{N}$, respectively.

\paragraph*{\textbf{Optimal posterior distribution}}
As both $\P$ and $\Q^*$ are bivariate distributions, we approximate their Probability Density Function (PDF) by discretizing a region encompassing over $95\%$ of the probability mass.
In Figure~\ref{fig:Gibbs_grid}, we depict the discretized 
$\P$ (left column) and $\Q^*$ when using $\numrollouts=8$ (middle column) and $\numrollouts=512$ (right column), and setting $\delta=0.2$. The top and bottom rows correspond to $\P_\mathcal{U}$ and $\P_\mathcal{N}$, respectively, and the color represents the approximated PDF. 
Additionally, the empirical and benchmark controllers are marked on the plots.

For $\numrollouts=8$, both posterior distributions associated with $\P_\mathcal{U}$ and $\P_\mathcal{N}$ are concentrated near the benchmark controller, whereas the empirical controller is positioned further away. 
This observation suggests that controllers sampled from the posterior are likely to more closely resemble the benchmark controller, potentially resulting in better out-of-sample performance than the empirical controller.
As the number of data sequences increases to $\numrollouts=512$, both posterior distributions become more tightly concentrated around the benchmark controller. As expected, the empirical distribution also improves, getting closer to the benchmark.
Comparing the rows of Figure~\ref{fig:Gibbs_grid} reveals that the more informative prior $\P_\mathcal{N}$ (second row), which reflects additional knowledge about $\mu_w$, yields a sharper posterior distribution. This observation suggests that our algorithm effectively integrates existing knowledge into the control design algorithm.

\begin{figure}
  \centering
  \vspace{1pt}
  \includegraphics[width=0.96\linewidth]{}
  \caption{Discretized PDF for the prior distributions (left) and the optimal posterior distributions with $\numrollouts=8$ (middle) and $\numrollouts=512$ (right). The top and bottom rows correspond to $\P_\mathcal{U}$ and $\P_\mathcal{N}$, respectively.
  Horizontal and vertical axes in each plot represent $\beta$ and $k$, while color indicates the PDF.
  The empirical and benchmark controllers are marked.}
  \label{fig:Gibbs_grid}
\end{figure}

\paragraph*{\textbf{Validity and tightness of the bound for $\Q^*$ and $\lambda^*$}}
A desirable upper bound ensures that the true cost of sampled controllers is below the upper bound in at least $1-\delta$ cases while avoiding being overly loose.
Below, we assess the quality of our bound~\eqref{eq:bound_qstar}.
We calculate the term $Z_\lambda$ on the right-hand side using the discretized prior distributions illustrated in Figure~\ref{fig:Gibbs_grid}.
For the left-hand side, that is the true cost of a sampled $\theta$, direct computation is intractable. 
Hence, we approximate it for each sampled $\theta$ by evaluating the empirical cost using a large dataset of $1024$ noise sequences, each of length $T{+}1{=}11$, sampled independently from $\mathcal{N}(\mu_w, \sigma_w^2)^{11}$. This approximation's accuracy is justified by the law of large numbers.
We investigate the influence of the training dataset size ($\numrollouts$), the prior distribution ($\P$), and the confidence level ($\delta$) on both the upper bound and the true cost. We recalculate $\Q^*$ and $\lambda^*$ and sample $\theta$ from the resulting posterior distribution for each scenario.

Figure~\ref{fig:ub} illustrates the true cost, $\L$, and the upper bound as per~\eqref{eq:bound_qstar} on the vertical axis for various configurations. The colors correspond to different choices for $\delta$ and the prior distribution, while the size of the training set, $\numrollouts$, is indicated on the $x$-axis. For each configuration, the true cost is estimated for $10$ different $\theta$ vectors sampled from $\Q^*$ through exhaustive test data sampling. The true costs of these sampled points are represented by circles aligned on the same vertical line. Note that the upper bound and the true cost in this figure are calculated after the transformation~\eqref{eq:cost_trans} has been applied to the FH cost to upper-bound it to $C=1$. Therefore, all values on the vertical axis are less than $1$.
\begin{figure}
  \centering
  \includegraphics[width=0.99\linewidth]{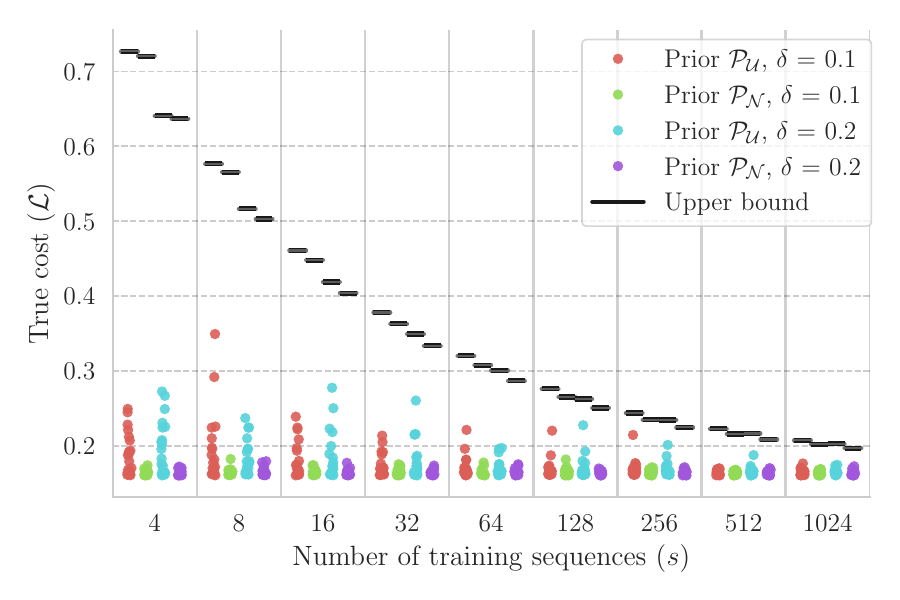}
  \caption{Comparison of the true cost, $\L$, and the upper bound~\eqref{eq:bound_qstar} for various configurations as a function of $\numrollouts$.
  Colors denote $\delta$ and prior distribution choices. The true cost in each setup is approximated for $10$ vectors $\theta$ sampled from $\Q^*$, shown as vertically aligned circles.}
  \label{fig:ub}
\end{figure}

In Figure~\ref{fig:ub}, we observe that the true cost consistently remains below the calculated upper bounds, indicating that they are all valid. When $\delta$ and $\numrollouts$ are fixed, the upper bound corresponding to $\P_\mathcal{N}$ is tighter than that for $\P_\mathcal{U}$. This observation aligns with our discussion in Section~\ref{subsec:qstar} that a more informative prior tightens the upper bound. 
Conversely, with a fixed prior and a fixed $\numrollouts$, increasing $\delta$ leads to a tighter upper bound, as it allows a higher probability of violating the bound. Moreover, as $\numrollouts$ increases, the upper bound uniformly tightens. A theoretical analysis of the impact of $\numrollouts$ on the upper bound will be developed in future research.

\subsection{Navigating planar robots}\label{subsec:experiments_robots}

\paragraph*{\textbf{Setup}}
We consider two point-mass robots moving in the Cartesian plane 
whose goal is to coordinately pass through a narrow valley and reach a target position 
while avoiding obstacles and collisions between them---see Figure~\ref{fig:corridor}.
Each robot is modeled as a double integrator subject to nonlinear drag forces.
The overall system state is denoted with $x\in\R^8$ and consists of the positions and velocities of the robots in the Cartesian coordinates. The inputs are the forces over the robots in each direction, represented by $u\in\R^4$.
Each robot is prestabilized by a simple proportional controller that takes it to the desired end position, possibly resulting in collisions and poor performance.
The system dynamics is subject to noise $w_t$ as in~\eqref{eq:system}.
We consider that the noise only affects the initial condition, i.e., $w_t=0$ for $t>0$.
Additionally, $w_0$ is drawn from a zero-mean Gaussian distribution with variance  $\sigma_w^2 = 0.2^2$, 
which we assume to be unknown. 
The dataset consists of $\numrollouts=30$  noise sequences over a horizon of $T=100$, where only the first elements are nonzero.

The stage cost is the sum of a quadratic function plus collision and obstacle avoidance regularizations, defined as:
\begin{equation*}
    l(x_t,u_t) = (\Delta x_t)^\top Q (\Delta x_t) + u_t^\top R u_t 
    + l_{ca}(d_t) + l_{oa}(x_t),
\end{equation*}
where $\Delta x_t = x_t - x_{\text{target}}$, $x_{\text{target}}$ is the desired state with the target positions and zero velocities, and $d_t\in\R_0^+$ is the distance between the robots at time $t$. 
The collision avoidance term is zero if the robots are further than a safe distance $D>0$. 
If $d_t<D$, one has $l_{ca}(d_t) = (d_t+\nu)^{-2}$, where $\nu>0$ is a small positive constant.
The obstacle avoidance term, $l_{oa}$, is a function of the distance between the robots and each obstacle.
This experiment is taken from~\cite{NeurSLS}, where the model and the stage cost function are further detailed.

\paragraph*{\textbf{Controller design}}
The controller is given by~\eqref{eq:neurSLS} and~\eqref{eq:REN}, ensuring that the closed-loop system remains stable as discussed in Section~\ref{sec:controller_parametrization}.
We set $n_\xi = 8$ and $n_\zeta = 8$, which leads to $d = 864$ parameters in total.
This controller architecture is trained once using the empirical approach which minimizes $\Lhat(\theta, \S)$, and once using our approach. 
In our method, we set $\delta = 0.1$ to ensure the bound~\eqref{eq:bound_qstar} holds with a probability of at least $0.9$. Additionally, we utilize a zero-mean spherical Gaussian prior with variance $7^2$, which imposes an L$_2$-regularization on the parameters $\theta$ of the REN~\eqref{eq:REN}. This choice is motivated by the fact that L$_2$-regularization of DNN weights helps mitigate overfitting in supervised learning. The variance of the prior is fine-tuned through cross-validation.

\paragraph*{\textbf{Results}}
Similar to the previous example, we approximate the controllers' true cost using a large test dataset of $500$ new noise trajectories.
To have a fair comparison, the test cost is calculated without transformation~\eqref{eq:cost_trans}.
Table~\ref{tab:robots} compares the cost and percentage of trajectories in which the robots collide, for the training and test datasets.
The empirical controller results in a marginally lower cost within the training dataset, whereas our approach exhibits superior performance in the test dataset. 
Both controllers avoid collisions in the training data but result in some collisions in the test data. This is because collision avoidance is only promoted through the stage cost function but not formally ensured.
Nevertheless, our method results in fewer collisions in the test data.
These observations highlight that the empirical controller tends to overfit the training samples, while our controller generalizes better to out-of-sample data.
Training the empirical controller and our controller took $212$ and $129$ minutes, respectively, using an AMD Ryzen Threadripper 3990X 64-Core Processor.

In Figure~\ref{fig:corridor}, each plot illustrates three trajectories starting from three initial conditions outside the training dataset in three cases: (\textit{a}) for the prestabilized system without an external controller, (\textit{b})-(\textit{c}) using the empirical controller, and (\textit{d})-(\textit{f}) employing the controller obtained by our approach. 
The utilized controllers for generating these plots are the same as those evaluated in Table~\ref{tab:robots}.
All trajectories are computed until $t=400$ to illustrate that stability is preserved.
As observed in the plot (\textit{a}), the prestabilizing controller leads to collisions between robots and oscillations around the target position, indicating poor performance.
Conversely, both of the employed controllers avoid collisions and achieve optimized trajectories in the plots (\textit{b})-(\textit{f}).
Animations are available in our \href{https://www.github.com/DecodEPFL/PAC-SNOC}{Github repository}.

\begin{table}
\caption{Comparison between the trained empirical controller and a sampled controller using our training approach.}
\label{tab:robots}
\begin{center}
\begin{tabular}{ccccc}
\toprule
\multicolumn{1}{c}{} & \multicolumn{2}{c}{\textbf{Cost}} & \multicolumn{2}{c}{\textbf{Collisions}} \\
\cmidrule(rl){2-3} \cmidrule(rl){4-5}
\textbf{Controller} & {Train} & {Test} & {Train} & {Test} \\
\midrule
Empirical & $\mathbf{21.79}$ & $23.14$ & $0.0\%$ & $5.8\%$ \\
Ours & $21.86$ & $\mathbf{22.29}$ & $0.0\%$ & $\mathbf{5.0\%}$ \\
\bottomrule
\end{tabular}
\end{center}
\end{table}

\begin{figure}
	\centering
	\begin{minipage}{0.3\linewidth}
        \includegraphics[width=\linewidth]{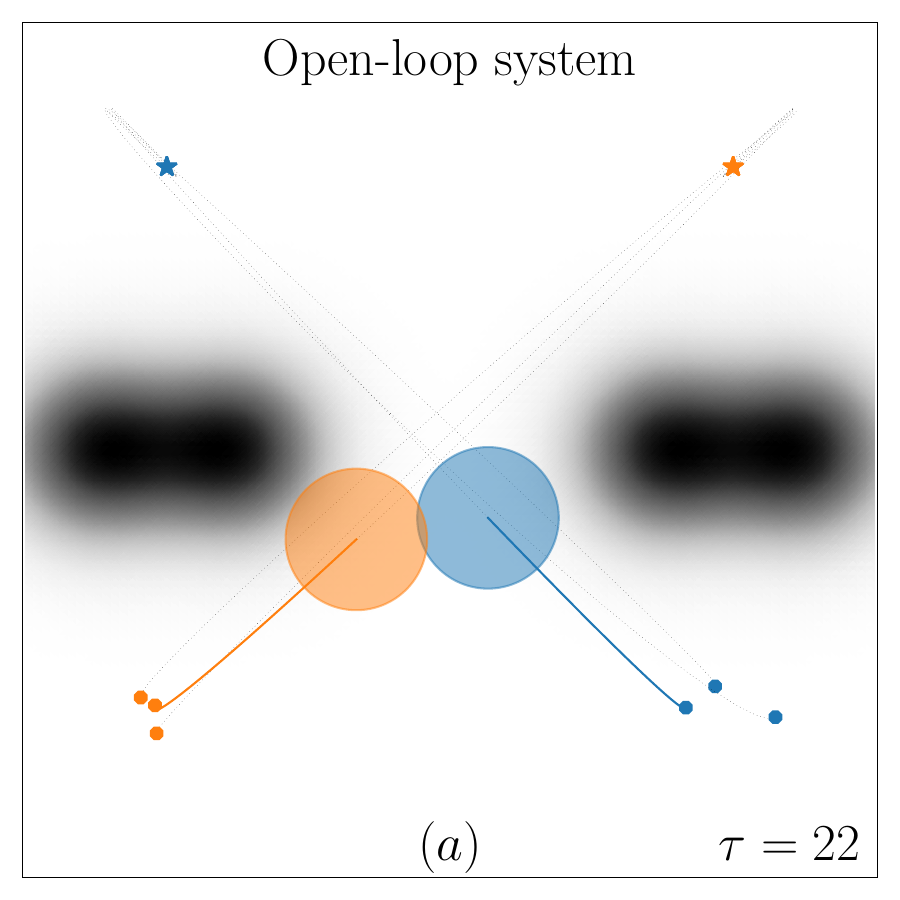}
	    \includegraphics[width=\linewidth]{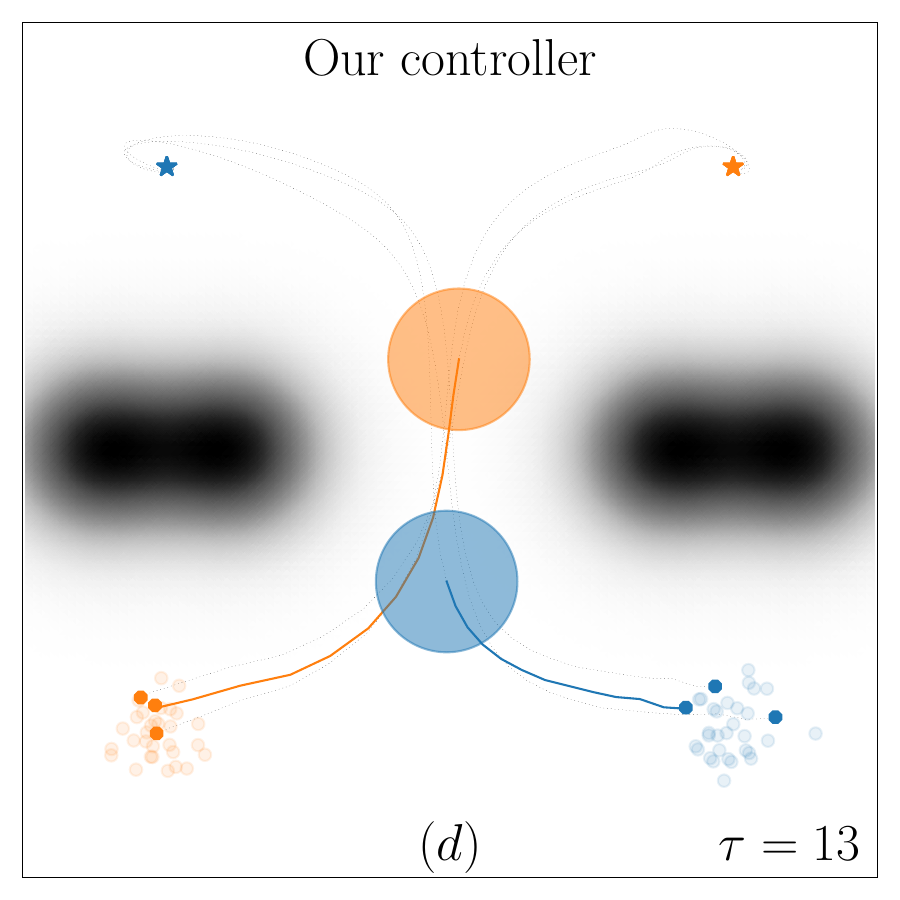}
	\end{minipage}%
	\begin{minipage}{0.3\linewidth}
        \includegraphics[width=\linewidth]{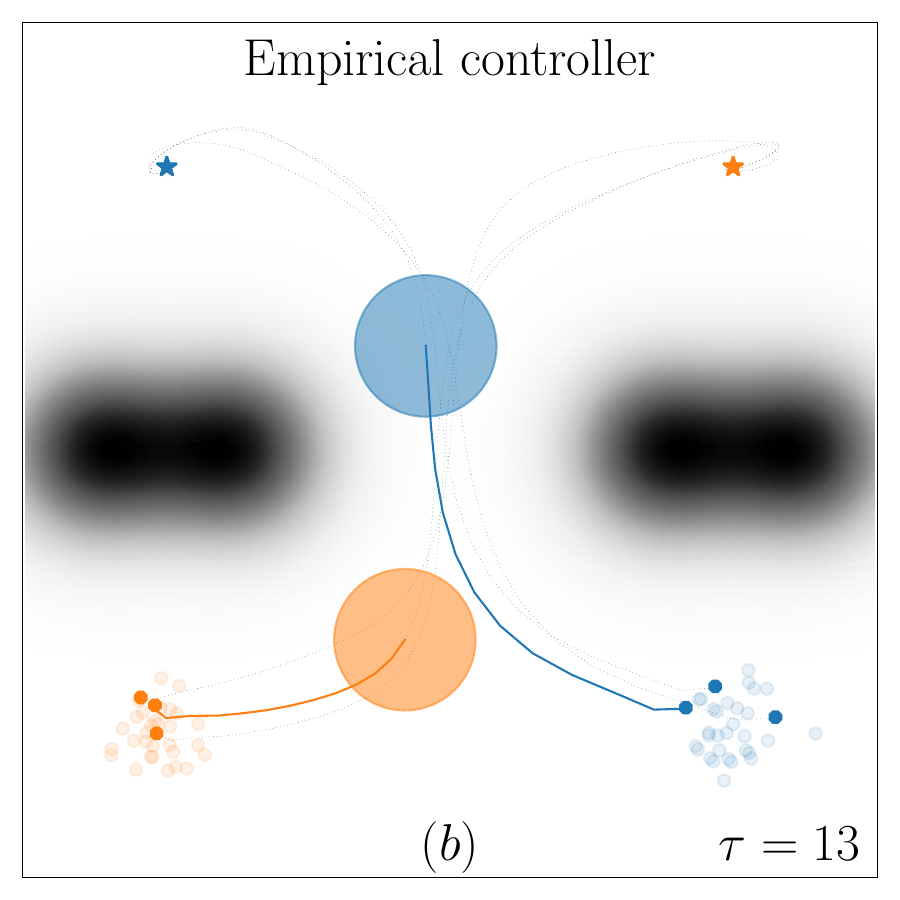}
	    \includegraphics[width=\linewidth]{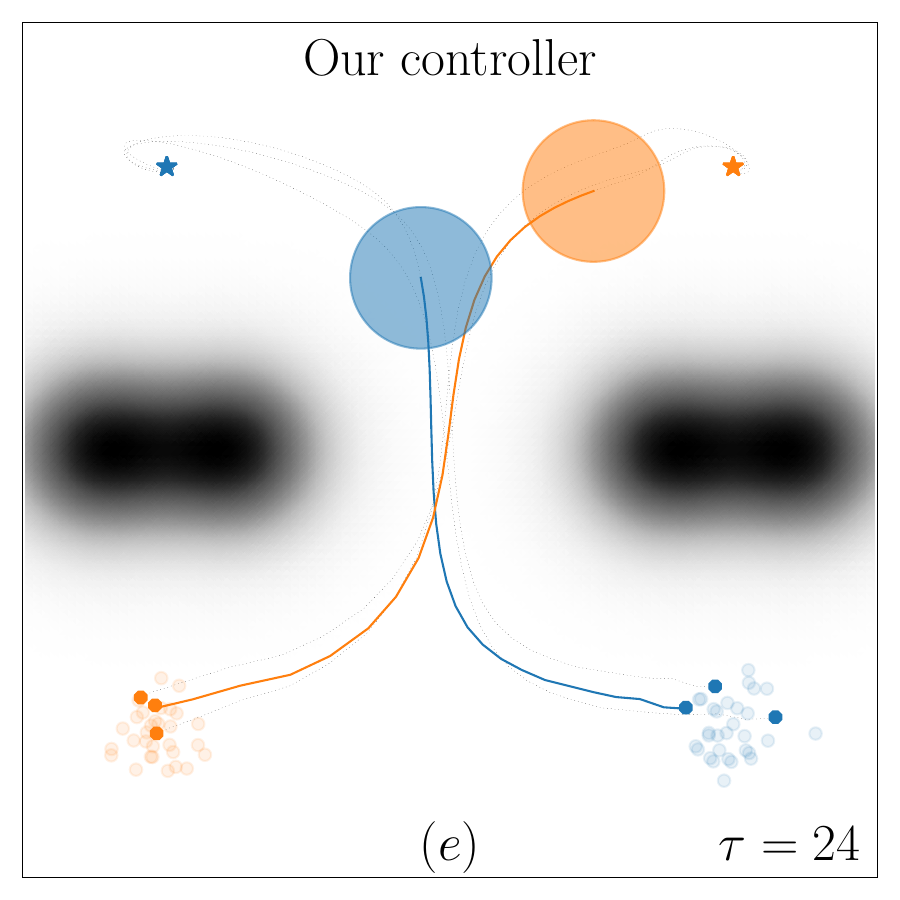}
	\end{minipage}%
	\begin{minipage}{0.3\linewidth}
        \includegraphics[width=\linewidth]{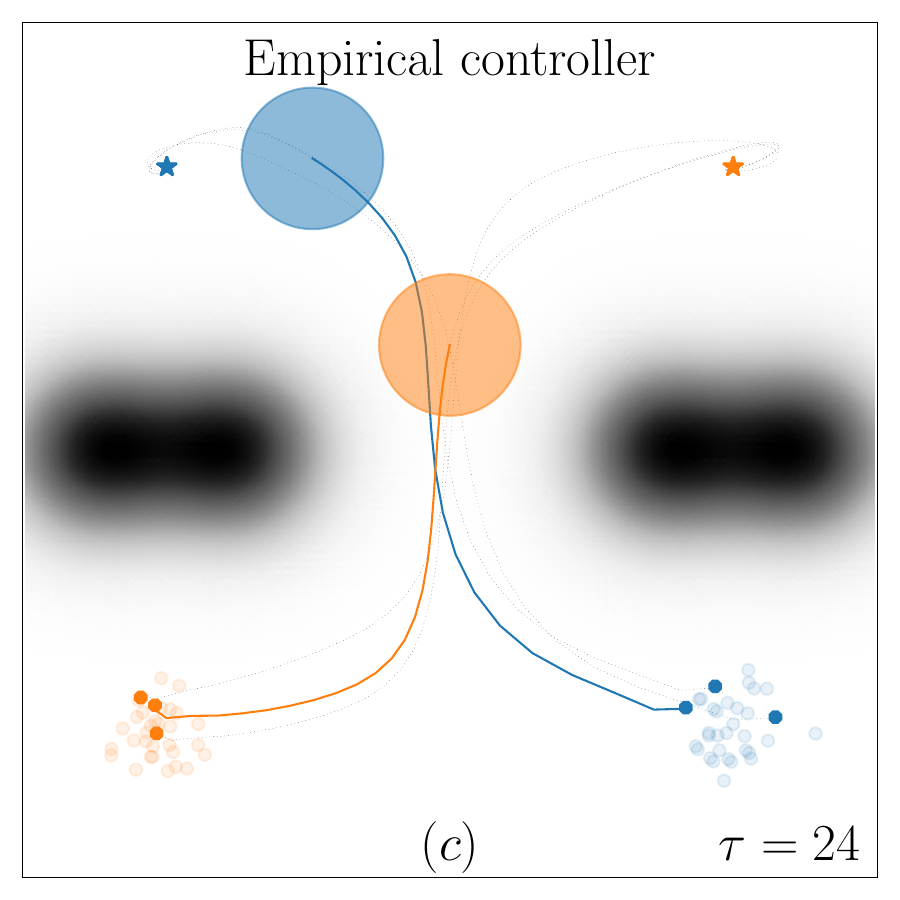}
	    \includegraphics[width=\linewidth]{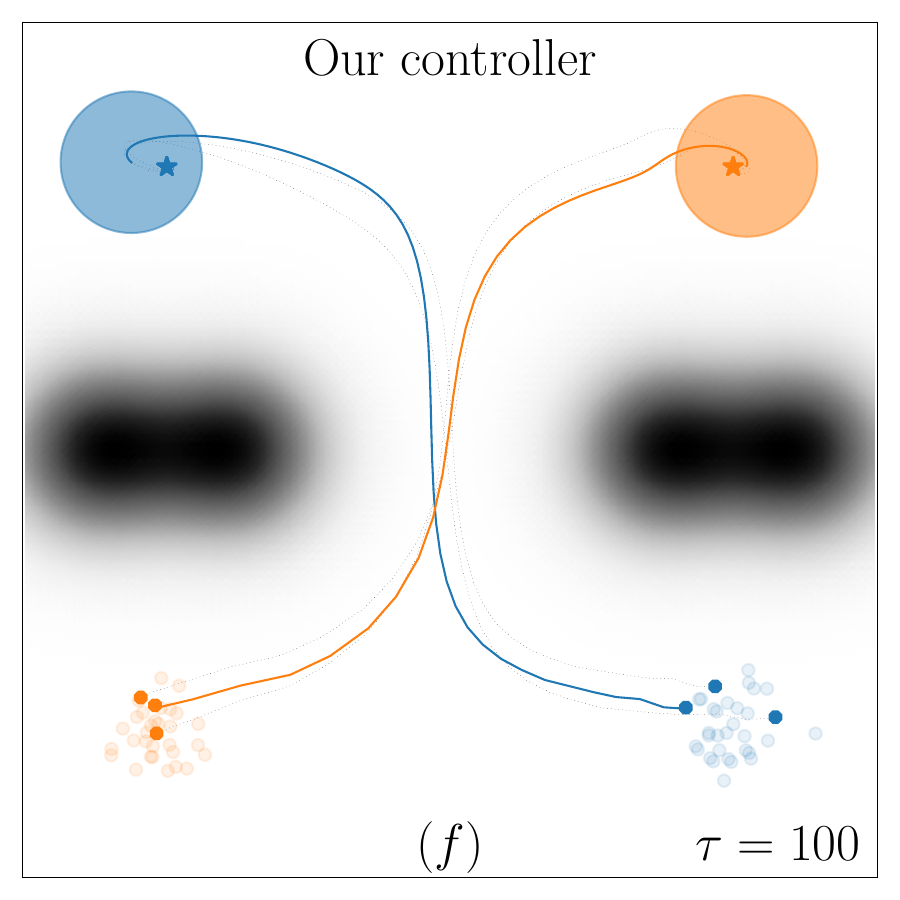}
	\end{minipage}
	\caption{
    Closed-loop test trajectories in the interval $[0,400]$ of
    (\textit{a}) the pre-stabilized system before training;
    (\textit{b})-(\textit{c}) the trained empirical controller; and
    (\textit{d})-(\textit{f}) one sampled controller from $\Q^*$ using SVGD.
    Training initial conditions are marked with $\circ$. 
    Snapshots are taken at instants $\tau$ indicated in each plot. 
    Colored balls represent the agents, with their radius indicating their size for collision avoidance.
    } 
	\label{fig:corridor}
\end{figure}

\paragraph*{\textbf{Scalability}} 
Using our approach, we train a controller with $n_\xi = 32$, $n_\zeta = 32$, leading to a total of $d = 11040$ parameters. The resulting test cost is $21.25$ and the collision ratio is $1.0\%$, improving the results in Table~\ref{tab:robots}. This experiment showcases the capability of our method in effectively training large controllers.

\section{Conclusions and future work}
This paper presents a novel generalization bound for the SNOC cost, drawing upon randomized bounds within the PAC-Bayes theory.
By combining the derived bound with the unconstrained parametrization of stabilizing controllers proposed in~\cite{NeurSLS, furieri2024learning}, we propose an innovative
control policy design algorithm suitable for addressing SNOC problems.

Possible future directions are:
further analyzing the bound's asymptotic behavior,
exploring model selection techniques among a set of sampled controllers while maintaining valid generalization bounds, 
and utilizing the knowledge obtained from existing controllers to set the prior distribution.


\bibliographystyle{IEEEtran}
\bibliography{root.bib}

\end{document}